\newtheorem{theorem}{Theorem}
\newtheorem{lemma}{Lemma}
\newtheorem{proposition}{Proposition}
\newcommand{\Z}{\mathbb{Z}}
\newcommand{\N}{\mathbb{N}}
\newcommand{\R}{\mathbb{R}}
\newcommand{\Cov}{\mathrm{Cov}}
\newcommand{\Qt}{\widetilde{Q}}
\newcommand{\matern}{Mat\'ern}
\title{Efficient Computation of Gaussian Likelihoods for Stationary Markov Random Field Models}
\author[1]{Joseph Guinness \thanks{jsguinne@ncsu.edu}}
\affil[1]{\it \small North Carolina State University, Department of Statistics}
\author[2]{Ilse C.\ F.\ Ipsen \thanks{ipsen@ncsu.edu}}
\affil[2]{\it \small North Carolina State University, Department of Mathematics}
\begin{document}

\maketitle

\abstract{\cite{rue2005gaussian} proposed a method for efficiently
  computing the Gaussian likelihood for stationary Markov random field
  models, when the data locations fall on a complete regular grid, and
  the model has no additive error term. The calculations rely on the
  availability of the covariances. We prove a theorem giving the rate
  of convergence of a spectral method of computing the covariances,
  establishing that the error decays faster than any polynomial in the
  size of the computing grid. We extend the exact likelihood
  calculations to the case of non-rectangular domains and missing
  values on the interior of the grid and to the case when an additive
  uncorrelated error term (nugget) is present in the model. We also
  give an alternative formulation of the likelihood that has a smaller
  memory burden, parts of which can be computed in parallel. We show
  in simulations that using the exact likelihood can give far better
  parameter estimates than using standard Markov random field
  approximations. Having access to the exact likelihood allows for
  model comparisons via likelihood ratios on large datasets, so as an
  application of the methods, we compare several state-of-the-art
  methods for large spatial datasets on an aerosol optical thickness
  dataset. We find that simple block independent likelihood and
  composite likelihood methods outperform stochastic partial
  differential equation approximations in terms of computation time
  and returning parameter estimates that nearly maximize the
  likelihood.}

\section{Introduction}

In a wide range of scientific and engineering applications, such as imaging, agricultural field trials, climate modeling, and remote sensing, spatial data are observed or reported on regular grids of locations, and the Gaussian process model is commonly used to model the data directly, or indirectly as a stage in a hierarchical model \citep{rue2009approximate,banerjee2014hierarchical}. Let $Y(\bm{x}) \in \R$, $\bm{x} \in \Z^2$ denote a real random variable indexed by the two-dimensional integer lattice. We say that $Y(\cdot)$ is a Gaussian process if, for any $n \in \N$ and any set of locations $\bm{x}_{1:n} = \{\bm{x}_1,\ldots,\bm{x}_n\}$, the vector $\bm{Y} = (Y(\bm{x}_1),\ldots,Y(\bm{x}_n))'$ has a multivariate normal distribution. 
A common decomposition of a Gaussian process is 
\begin{align*}
Y(\bm{x}) = \mu_\beta(\bm{x}) + Z(\bm{x}) + \varepsilon(\bm{x}),
\end{align*}
where $\mu_\beta(\bm{x})$ is a nonrandom function depending on mean parameter $\beta$, $Z(\cdot)$ is a Gaussian process with mean $E(Z(\bm{x})) = 0$ and covariance function $\Cov(Z(\bm{x}),Z(\bm{y})) = K_\theta(\bm{x},\bm{y})$ depending on covariance parameter $\theta$, and $\varepsilon(\bm{x}) \sim \mbox{i.i.d. } N(0,\sigma^2)$. The usual terminology in this scenario is to call $Z(\cdot)$ a latent process, since it is not directly observed, and to say that the model contains a nugget effect, referring to the presence of the $\varepsilon(\bm{x})$ term, which is used to model microscale variation or measurement error. 

We define the mean vector $\bm{\mu}_\beta := E(\bm{Y}) = (\mu_\beta(\bm{x}_1),\ldots,\mu_\beta(\bm{x}_n))'$, the random vector $\bm{Z} = (Z(\bm{x}_1),\ldots,Z(\bm{x}_n))'$ and its covariance matrix $\Sigma_\theta$. Then the covariance matrix for $\bm{Y}$ is $\Sigma_\theta + \sigma^2 I$, where $I$ is an identity matrix of appropriate size. The parameters can be estimated using the loglikelihood function for $\beta$, $\theta$, and $\sigma^2$ from $\bm{Y}$
\begin{align*}
L(\beta,\theta,\sigma^2) = -\frac{n}{2}\log(2\pi) - \frac{1}{2}\log\det (\Sigma_\theta + \sigma^2 I) - \frac{1}{2}(\bm{Y}-\bm{\mu}_\beta)'(\Sigma_\theta+\sigma^2 I)^{-1}(\bm{Y}-\bm{\mu}_\beta).
\end{align*}
Memory and time constraints begin to prohibit storing and factoring $\Sigma_\theta + \sigma^2 I$ when $n$ is between $10^4$ and $10^5$. Since the inclusion of a non-zero mean vector does not introduce a substantial additional computational burden, we assume $\bm{\mu}_\beta = 0$ in our discussion of the methods, to simplify the formulas and focus our attention on estimating covariance and nugget parameters.

There is significant interest in identifying flexible covariance functions $K_\theta(\bm{x},\bm{y})$ for which the likelihood function--or an approximation to it--can be computed efficiently \citep{sun2012geostatistics}. Gaussian Markov random field (GMRF) models \citep{rue2005gaussian} are of particular value because they provide a flexible class of models that induce sparsity in $\Sigma_\theta^{-1}$, a feature that can be exploited by sparse matrix factorization algorithms. A GMRF on the infinite integer lattice $\Z^2$ is naturally specified by the conditional expectation and variance of each observation given every other observation on $\Z^2$. Formally, we write
\begin{align}\label{condspecfull}
Z(\bm{x}) | \{ Z(\bm{y}) : \bm{y} \in \Z^2 \setminus \bm{x} \} \sim N \bigg( -\sum_{\bm{y} \in \Z^2 \setminus \bm{x} } \frac{\theta(\bm{x},\bm{y})}{\theta(\bm{x},\bm{x})} Z(\bm{y}) \, , \, 1/\theta(\bm{x},\bm{x}) \bigg),
\end{align}
where $\theta(\cdot,\cdot)$ is a nonrandom function that encodes the conditional specification of $Z(\cdot)$. We use the symbol $\theta$ for this function and for the vector of covariance parameters since they both control the covariance function of $Z(\cdot)$. If for every $\bm{x}$, $\theta(\bm{x},\bm{y})$ is zero for all but a finite number of locations $\bm{y}$, then the random field is said to be Markov. GMRFs can also be viewed as approximations to more general classes of Gaussian random fields \citep{rue2002fitting}. GMRFs are also used as a stage in other spatial hierarchical multiresolution models \citep{nychka2014multi} and non-Gaussian models \citep{rue2009approximate}. The most common approaches to achieving efficient Bayesian inference with latent GMRFs involve Markov chain Monte Carlo (MCMC) \citep{knorr2002block} or an integrated nested Laplace approximation (INLA) \citep{rue2009approximate} to make inferences about $\theta$ and $\sigma^2$. For large datasets, the usual MCMC methods involve repeated proposals of $\bm{Z}$ to integrate out the latent GMRF.

INLA uses a clever rewriting of the likelihood for the GMRF parameters given $\bm{Y}$ to avoid the need to integrate over the distribution of $\bm{Z}$. In principle, this same approach could be used in MCMC as well. However, when using a finite sample of observations to make inferences about parameters controlling processes defined on infinite lattices, edge effects are known to introduce nonneglible biases \citep{guyon1982parameter}. For GMRFs, the issue is that conditional distributions of observations on the boundary of the study region given only the observations on the interior cannot be easily expressed in terms of $\theta$ and depend on the particular configuration of the observations. This is also true for observations on the interior whose neighbors are missing. A common approach to mitigating edge effects is to expand the lattice and treat observations on the expanded lattice as missing \citep{paciorek2007bayesian,lindgren2011explicit,stroud2014bayesian,guinnesscirculant}. This approach is generally approximate in nature, and the practitioner must decide how much lattice expansion is required to mitigate the edge effects. \cite{stroud2014bayesian} describe an exact method, but \cite{guinnesscirculant} argue that the tradeoff for exactness is a large number of missing values that must be imputed, slowing the convergence of iterative algorithms. \cite{besag1995conditional} give an approximate method for mitigating edge effects based on modifying an approximation to the precision matrix when the data form a complete rectangular grid. \cite{dutta2014h} use a similar method paired with approximate likelihoods based on the h-likelihood \citep{henderson1959estimation}.

\cite{rue2005gaussian} proposed an exact method for efficiently computing the Gaussian likelihood for $\theta$ and $\sigma^2$ from $\bm{Y}$ when the random field stationary and has no nugget term, so when that method can be applied, there is no need to impute missing values or expand the lattice to a larger domain. The method does not impose any unrealistic boundary conditions on the spatial model--the likelihood is for the infinite lattice model observed on a finite lattice. This paper provides theoretical support for this method, extends it to a more general case, gives simulation and timing results for its implementation, and applies it to the problem of model comparison on a large spatial dataset. One drawback of the method is that it requires the covariances, which are not generally available in closed form. We prove a theorem establishing that a spectral method for computing the covariances converges faster than any polynomial in the size of the computing grid. Computing covariances with the spectral method uses fast Fourier transform (FFT) algorithms, and so is efficient in terms of memory and computational time. In addition, we extend the \cite{rue2005gaussian} to the case of a non-rectangular domain and missing values on the interior of the domain, which is a common scenario for satellite data, and to the case when the model has a nugget term, which often improves the fit of the models. We also give an alternative formulation of the likelihood that has a smaller memory burden and can be computed in parallel. 

In Section \ref{methodsection}, we review the likelihood calculation and outline our extensions to it. Section \ref{simulationsection} contains simulation results showing cases in which standard MRF likelihood approximations fail due to boundary effects, and timing results demonstrating the computational efficiency of our implementation of the methods in Matlab. We apply the methods to a set of aerosol optical thickness values observed over the Red Sea, which has an irregular coastline and several islands over which the data are not reported. The analysis of the Red Sea data is in Section \ref{datasection}, where we highlight the role of the exact likelihood for both parameter estimation and model comparison, and where stochastic partial differential equation approximations are outperformed by much simpler independent blocks likelihood approximations. We conclude with a discussion in Section \ref{discussionsection}.

\section{Efficient Gaussian Likelihood Calculations}\label{methodsection}
In this section, we outline the theoretical results underlying the likelihood calculations for GMRF models. For any $\bm{x} \in \Z^2$, define $S_\theta(\bm{x})$ to be the set $\{ \bm{y} \in \Z^2 : \theta(\bm{x},\bm{y}) \neq 0 \}$, which we refer to as the neighborhood set of $\bm{x}$ under $\theta$. If the random field is Markov, then it is possible that all neighbors of an observation location are contained in the finite set of all observation locations. 

\vskip11pt

\noindent {\bf Definition}: Let $1 \leq i \leq n$. If $S_\theta(\bm{x}_i) \subset \bm{x}_{1:n}$, we say that $\bm{x}_i$ is a {\it fully neighbored} location, and $Z(\bm{x}_i)$ is a {\it fully neighbored} observation. If a location or observation is not fully neighbored, we say that it is {\it partially neighbored}.

\vskip10pt

We define $m_n \leq n$ to be the number of partially neighbored observations among $\bm{x}_{1:n}$. The partially neighbored observations consist of those along the boundary of the study region and any observations on the interior of the study region that have at least one missing neighbor. Figure \ref{fullyneighbored} contains an illustration. For example, if $\bm{x}_{1:n}$ is a complete square grid of dimension $(1000,1000)$, and each location's neighborhood set is the four adjacent grid points, $m_n = 3996$, while $n = 10^6$. Practical application of the methods depends on the following lemma, so we state it here first to give context to the theoretical results that follow.

\begin{figure}
\centering
\includegraphics[width=0.7\textwidth]{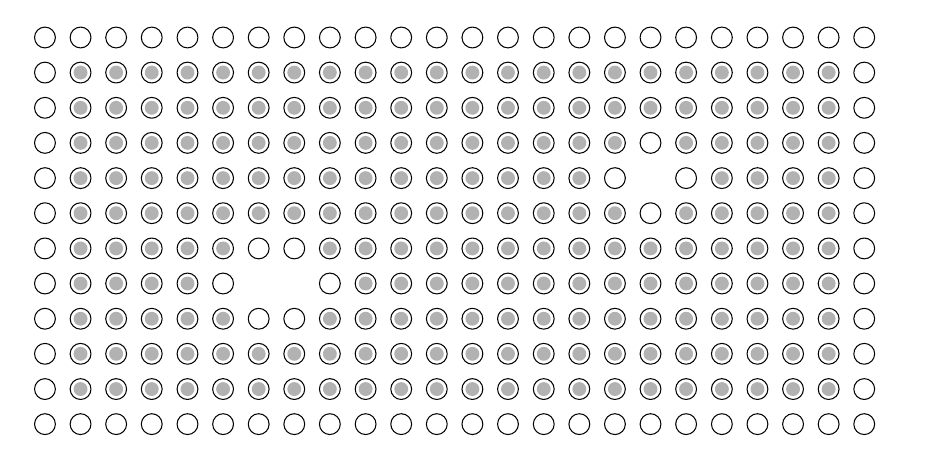}
\caption{Illustration of the fully and partially neighbored observations with neighborhood set consisting of four adjacent grid points. Black circles indicate observation locations, and those with a gray interior are the fully neighbored locations.}
\label{fullyneighbored}
\end{figure}

\begin{lemma}\label{eitherortheorem}
Let $\bm{x}_{1:n}$ be a finite set of observation locations for the infinite GMRF $Z(\cdot)$ specified by $\theta$, let $Q_\theta = \Sigma_\theta^{-1}$, the inverse of the covariance matrix for $\bm{Z}$, and let $\bm{x}_i$ and $\bm{x}_j$ be two observation locations. If either $\bm{x}_i$ or $\bm{x}_j$ is fully neighbored, then $Q_\theta[i, j] = \theta(\bm{x}_i,\bm{x}_j)$.
\end{lemma}

Let $P$ be a permutation matrix for which $P\bm{Z} = (\bm{Z}_1',\bm{Z}_2')'$ reorders the components of $\bm{Z}$ so that $\bm{Z}_1$ contains the $m_n$ partially neighbored observations and $\bm{Z}_2$ contains the $n-m_n$ fully neighbored observations. Then define the block matrices
\begin{align}\label{partition}
P\Sigma_\theta P' = \begin{bmatrix}
\Sigma_{11} & \Sigma_{12} \\
\Sigma_{21} & \Sigma_{22}
\end{bmatrix}, \quad 
P Q_\theta P' = \begin{bmatrix}
Q_{11} & Q_{12} \\
Q_{21} & Q_{22}
\end{bmatrix},
\end{align}
so that $\Sigma_{11}$ the covariance matrix for $\bm{Z}_1$, and $\Sigma_{22}$ is the covariance matrix for $\bm{Z}_2$. Since $Q_{12}$, $Q_{21}$ and $Q_{22}$ all contain either a row or a column corresponding to a fully neighbored observation, they are all sparse and have entries given by Lemma \ref{eitherortheorem}. On the other hand $Q_{11}$ is not guaranteed to be sparse, and its entries cannot be exactly determined by $\theta$. Indeed, this issue of edge effects has been a topic of continued discussion throughout the Markov random field literature, see for example \cite{besag1995conditional}, and \cite{besag1999bayesian}.

When $\theta(\bm{x},\bm{y})$ can be expressed as $\eta(\bm{x}-\bm{y})$ for every $\bm{x}$ and $\bm{y}$, the random field is said to be stationary. In this case, the covariance function has a spectral representation
\begin{align}\label{spectralrepresentation}
K_{\theta}(\bm{x},\bm{y}) = \int_{[0,2\pi]^2} \frac{\exp(i\bm{\omega}'(\bm{x}-\bm{y}))}{\sum_{\bm{h} \in \Z^2 } \eta(\bm{h})\exp(i\bm{\omega}'\bm{h})}d\bm{\omega},
\end{align}
where the spectral density $f_{\theta}(\bm{\omega}) = \left(\sum_{\bm{h} \in \Z^2 } \eta(\bm{h})\exp(i\bm{\omega}'\bm{h})\right)^{-1}$ is required to be nonnegative for all $\bm{\omega} \in [0,2\pi]^2$. Assume that all of the observation locations fall inside a rectangular grid of size $\bm{n} = (n_1,n_2)$, and the axes have been defined so that $n_1 < n_2$. To compute the covariances, we follow a suggestion by \cite{besag1995conditional} to numerically evaluate the integrals in \eqref{spectralrepresentation}. We use the sums
\begin{align}\label{riemanncalc}
K_\theta(\bm{x},\bm{y};\, J,\bm{n}) = \frac{4\pi^2}{n_1n_2 J^2} \sum_{\bm{j} \in F_J} f_\theta(\bm{\omega}_{\bm{j}})\exp(i\bm{\omega}_{\bm{j}}'(\bm{x}-\bm{y})),
\end{align}
where $\bm{\omega}_{\bm{j}} = 2\pi/J(j_1/n_1,j_2/n_2)$ are Fourier frequencies on a grid $F_J$ of size $(n_1J,n_2J)$. The following theorem gives a bound on the convergence rate of this method for calculating the covariances.

\begin{theorem}\label{covariancetheorem}
If $f(\bm{\omega})$ is the spectral density for a Markov random field on $\Z^2$ and is bounded above, then for $J \geq 2$ and any $p \in \N$, there exists a constant $C_p < \infty$ such that
\begin{align*}
\big| K_\theta(\bm{x},\bm{y}) - K_\theta(\bm{x},\bm{y};\,J,\bm{n})| \leq \frac{C_p} {(n_1J)^{p}}.
\end{align*}
\end{theorem}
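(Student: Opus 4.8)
The plan is to show that the quadrature error is \emph{exactly} an aliasing sum of the true covariances at large lags, and then to bound that sum using the rapid decay of the covariances, which I will extract from the smoothness of the spectral density. Throughout write $\bm{d} = \bm{x}-\bm{y}$ and $N_1 = n_1 J$, $N_2 = n_2 J$, so that the quadrature nodes $\bm{\omega}_{\bm{j}} = (2\pi j_1/N_1,\, 2\pi j_2/N_2)$ are the $N_1 \times N_2$ equally spaced points of $[0,2\pi]^2$ and $f_\theta$ is periodic on this cell.

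First I would record the Fourier-series representation dual to \eqref{spectralrepresentation}, namely $f_\theta(\bm{\omega}) = (2\pi)^{-2}\sum_{\bm{m}\in\Z^2} K_\theta(\bm{m}) \exp(-i\bm{\omega}'\bm{m})$, so that the true covariances are, up to the constant $(2\pi)^{-2}$, the Fourier coefficients of $f_\theta$. Substituting this series into \eqref{riemanncalc} and interchanging the two sums (valid since the $K_\theta(\bm{m})$ are absolutely summable, which follows from the decay established below), the node sum collapses by the discrete orthogonality relation $\sum_{\bm{j}} \exp(i\bm{\omega}_{\bm{j}}'\bm{k}) = N_1 N_2$ when $k_1 \equiv 0 \pmod{N_1}$ and $k_2 \equiv 0 \pmod{N_2}$, and $0$ otherwise. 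This yields the exact identity
\begin{align*}
K_\theta(\bm{x},\bm{y};\,J,\bm{n}) - K_\theta(\bm{x},\bm{y}) = \sum_{\bm{r}\in\Z^2\setminus\{\bm{0}\}} K_\theta\big(\bm{d} + (N_1 r_1,\, N_2 r_2)\big),
\end{align*}
which expresses the error purely in terms of covariances at lags shifted by nonzero multiples of $N_1$ and $N_2$.

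The main step, and the one I expect to be the crux, is to control the decay of $K_\theta(\bm{m})$ as $|\bm{m}|\to\infty$. Here the Markov hypothesis is used decisively: because $\eta$ has finite support, $1/f_\theta(\bm{\omega}) = \sum_{\bm{h}} \eta(\bm{h})\exp(i\bm{\omega}'\bm{h})$ is a trigonometric polynomial, and since $f_\theta$ is bounded above this polynomial is bounded below by a positive constant and hence never vanishes. Consequently $f_\theta = 1/(\text{nonvanishing trigonometric polynomial})$ is infinitely differentiable, indeed real-analytic, on the torus, and the standard integration-by-parts bound on the Fourier coefficients of a $C^\infty$ periodic function gives, for every $q\in\N$, a constant $B_q$ with $|K_\theta(\bm{m})| \le B_q (1+|\bm{m}|)^{-q}$.

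Finally I would insert this bound into the aliasing identity. The hypothesis $J\ge 2$ guarantees $N_\ell = n_\ell J \ge 2 n_\ell > 2|d_\ell|$ for $\ell=1,2$, so for any $\bm{r}\ne\bm{0}$ the shifted lag obeys $|\bm{d}+(N_1 r_1, N_2 r_2)| > N_1/2 = n_1 J/2$ in whichever coordinate the shift is nonzero. Choosing $q = p+3$ and summing the resulting two-dimensional lattice series, which converges, bounds the error by $C_p (n_1 J)^{-p}$ for a finite constant $C_p$, as claimed. The only remaining care is the routine estimate of the tail of that lattice sum, for which the factor-of-two margin supplied by $J\ge 2$ is exactly what is needed.
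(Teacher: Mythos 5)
Your proof is correct and follows essentially the same route as the paper's: the aliasing identity for the quadrature error, infinite differentiability of $f_\theta$ deduced from the Markov property plus the upper bound, polynomial decay of the covariances at large lags, and a shell-by-shell bound on the resulting lattice sum using the margin provided by $J\geq 2$. The only differences are cosmetic --- you derive the aliasing identity from discrete orthogonality where the paper cites it from earlier work, and you handle all $p$ uniformly by taking a higher decay order $q=p+3$ where the paper treats $p\geq 3$ directly and patches $p=1,2$ by hand.
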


\noindent We provide a proof in Appendix \ref{proofsection}. In other words, the error decays faster than a polynomial in $(n_1 J)$ of degree $p$ for every $p \geq 1$, so the sums converge quickly with $J$, and the error is smaller for a given $J$ with larger $n_1$. For large observation grids and common GMRF models, the covariances can be computed to an absolute error of less than $10^{-15}$ with $J$ as small as $2$ or $3$, and thus this method for computing covariances can be considered exact up to machine precision. We provide a numerical study in Appendix \ref{covcomputeappendix}. FFT algorithms allow us to compute covariances at all of the required lags $\bm{x} - \bm{y}$ in $O( (n_1 n_2 J^2)\log(n_1 n_2 J^2))$ floating point operations and $O(n_1 n_2 J^2)$ memory.

\cite{rue2005gaussian} proposed computing the likelihood for $\bm{Z}$ by first computing the likelihood for the points on the partially neighbored boundary of the domain, which requires computing their covariances, and then the conditional likelihood of the interior fully neighbored points given the partially neighbored points on the boundary. Here, we consider the more general case of having some partially neighbored points on the interior and a non-rectangular domain. The following proposition gives the general expressions for the determinant and inverse of $\Sigma_\theta$ that can be used when the model does not contain a nugget.

\begin{proposition}\label{nonuggettheorem}
If $\Sigma_\theta$ and $Q_\theta$ are partitioned as in \eqref{partition}, then
\begin{align*}
(a) \quad &\det( \Sigma_\theta ) = \det( \Sigma_{11} )/\det( Q_{22} )\\
(b) \quad &\Sigma_\theta^{-1} = 
\begin{bmatrix}
I & -\Sigma_{11}^{-1}\Sigma_{12}\\
0 & I
\end{bmatrix}
\begin{bmatrix}
\Sigma_{11}^{-1} & 0 \\
0 & Q_{22}
\end{bmatrix}
\begin{bmatrix}
I & 0 \\
-\Sigma_{21}\Sigma_{11}^{-1} & I
\end{bmatrix}
\end{align*}
\end{proposition}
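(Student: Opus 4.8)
The plan is to exploit the block $LDL'$ (Schur-complement) factorization of the permuted covariance matrix and then use the standard block-inverse identity to rewrite the Schur complement in terms of $Q_{22}$. Because $P$ is a permutation matrix, $\det(P\Sigma_\theta P') = \det(\Sigma_\theta)$ and $(P\Sigma_\theta P')^{-1} = P Q_\theta P'$, so it suffices to work throughout with the permuted matrices in \eqref{partition}; the blocks appearing on the right-hand sides of (a) and (b) are exactly the blocks defined there. Since $\Sigma_\theta$ is a covariance matrix it is symmetric positive definite, hence so is its principal submatrix $\Sigma_{11}$, which guarantees that $\Sigma_{11}^{-1}$ and the Schur complement $S := \Sigma_{22} - \Sigma_{21}\Sigma_{11}^{-1}\Sigma_{12}$ exist and are invertible.

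First I would record the block factorization
\[
P\Sigma_\theta P' = \begin{bmatrix} I & 0 \\ \Sigma_{21}\Sigma_{11}^{-1} & I \end{bmatrix} \begin{bmatrix} \Sigma_{11} & 0 \\ 0 & S \end{bmatrix} \begin{bmatrix} I & \Sigma_{11}^{-1}\Sigma_{12} \\ 0 & I \end{bmatrix},
\]
which is verified by multiplying the three factors and using $\Sigma_{21}=\Sigma_{12}'$. Taking determinants, and noting that the two unit-triangular factors have determinant one, gives $\det(\Sigma_\theta) = \det(\Sigma_{11})\det(S)$. Inverting the factorization (the outer triangular factors invert by negating their off-diagonal block) gives
\[
P Q_\theta P' = \begin{bmatrix} I & -\Sigma_{11}^{-1}\Sigma_{12} \\ 0 & I \end{bmatrix} \begin{bmatrix} \Sigma_{11}^{-1} & 0 \\ 0 & S^{-1} \end{bmatrix} \begin{bmatrix} I & 0 \\ -\Sigma_{21}\Sigma_{11}^{-1} & I \end{bmatrix}.
\]

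The key step, which links these expressions to the sparse block $Q_{22}$, is the identity $Q_{22} = S^{-1}$: multiplying out the bottom-right block of the displayed inverse shows that the $(2,2)$ block of $P Q_\theta P'$ equals $S^{-1}$, the inverse of the Schur complement of $\Sigma_{11}$. Substituting $S^{-1}=Q_{22}$ into the inverse factorization yields (b) immediately, and substituting $\det(S) = 1/\det(Q_{22})$ into the determinant formula yields (a).

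I expect the only real content to be the identification $Q_{22}=S^{-1}$; everything else is routine block-matrix algebra, but this identity is what allows the dense, analytically intractable Schur complement $\Sigma_{22}-\Sigma_{21}\Sigma_{11}^{-1}\Sigma_{12}$ to be replaced by $Q_{22}^{-1}$, whose entries are furnished directly by $\theta$ through Lemma \ref{eitherortheorem}. Thus the main \emph{obstacle} is conceptual rather than computational --- recognizing that the bottom-right block of the precision matrix is precisely the inverse Schur complement --- after which the determinant and inverse formulas follow by reading off the factorization.
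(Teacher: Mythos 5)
Your proof is correct and is essentially the argument the paper intends: the paper offers no separate formal proof of this proposition, only the remark that it is the matrix version of the decomposition $p(\bm{z}_1,\bm{z}_2)=p(\bm{z}_1)p(\bm{z}_2\mid\bm{z}_1)$ with $Q_{22}^{-1}$ identified as the Schur complement of $\Sigma_{11}$, which is exactly the block $LDL'$ factorization and the identity $Q_{22}=S^{-1}$ that you make explicit. Your linear-algebra phrasing and the paper's probabilistic phrasing are the same argument in different language.
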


\noindent Proposition \ref{nonuggettheorem} can be viewed as a matrix version of the decomposition of the likelihood $p(\bm{z}_1,\bm{z}_2) = p(\bm{z}_1)p(\bm{z}_2 | \bm{z}_1)$, since $\bm{Z}_1$ is normal with mean 0 and covariance matrix $\Sigma_{11}$, and $\bm{Z}_2 | \bm{Z}_1$ is normal with mean $\Sigma_{21}\Sigma_{11}^{-1}\bm{Z}_1$ and covariance matrix $Q_{22}^{-1}$, which is the Schur complement of $\Sigma_{11}$. The entries of $\Sigma_{11}$ are obtained with an FFT, as in Theorem \ref{covariancetheorem}, and the entries of $Q_{22}$ are known analytically. The various quantities required for evaluating the determinant and for multiplying $\Sigma_\theta^{-1}$ by $\bm{Y}$ can be easily computed after the Cholesky factorizations of $\Sigma_{11}$ and $Q_{22}$ have been obtained. Neither factorization is computationally demanding because $\Sigma_{11}$, though dense, is only of size $m_n \times m_n$, and $Q_{22}$, though large, is sparse and can be factored with sparse Cholesky algorithms. Multiplication of $\Sigma_{21}$ or $\Sigma_{12}$ by a vector is completed with circulant embedding techniques \citep{wood1994simulation}.
 
Including a nugget in the model increases the computational demands because $(\Sigma_\theta + \sigma^2 I)^{-1}$ is typically dense, even if $\Sigma_\theta^{-1}$ is sparse. Recent work on solving diagonally perturbed linear systems include \cite{Bella11,DuSog15,ErwMar12,SSX14,VKny13}. Here, we exploit the fact that most of the inverse of the unperturbed matrix is known and sparse, and we provide methods for evaluating the determinant in addition to solving linear systems. The covariance matrix for $\bm{Y}$ can be decomposed as
\begin{align}\label{nuggetdecomp}
\Sigma_\theta + \sigma^2 I= \Sigma_\theta ( I + \sigma^2 Q_\theta) = \Sigma_\theta ( \sigma^{-2} I + Q_\theta ) \sigma^2 I, 
\end{align}
and thus the following proposition holds.

\begin{proposition}\label{fullQtheorem}
\begin{align*}
(a) & \quad \det( \Sigma_\theta + \sigma^2 I) = \det(\Sigma_\theta) \det(I + \sigma^2 Q_\theta)\\
(b) & \quad (\Sigma_\theta + \sigma^2 I)^{-1} = (I + \sigma^2 Q_\theta)^{-1}Q_\theta.
\end{align*}
\end{proposition}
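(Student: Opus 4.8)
The plan is to extract both identities directly from the factorization already recorded in \eqref{nuggetdecomp}; the entire content of the proposition is then a consequence of two elementary facts, namely multiplicativity of the determinant and the reversal rule for inverting a product. First I would observe that the factorization $\Sigma_\theta + \sigma^2 I = \Sigma_\theta(I + \sigma^2 Q_\theta)$ holds simply because $\Sigma_\theta Q_\theta = \Sigma_\theta \Sigma_\theta^{-1} = I$, so that $\Sigma_\theta(I + \sigma^2 Q_\theta) = \Sigma_\theta + \sigma^2 \Sigma_\theta Q_\theta = \Sigma_\theta + \sigma^2 I$. No further work is needed to justify \eqref{nuggetdecomp} itself.

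For part (a), applying $\det(AB) = \det(A)\det(B)$ to this product yields $\det(\Sigma_\theta + \sigma^2 I) = \det(\Sigma_\theta)\det(I + \sigma^2 Q_\theta)$ at once. For part (b), I would first confirm that $I + \sigma^2 Q_\theta$ is nonsingular: since $\Sigma_\theta$ is a positive definite covariance matrix, $Q_\theta = \Sigma_\theta^{-1}$ is positive definite, so every eigenvalue of $I + \sigma^2 Q_\theta$ has the form $1 + \sigma^2 \lambda$ with $\lambda > 0$ and therefore exceeds $1$. With both factors invertible, inverting the product gives $(\Sigma_\theta + \sigma^2 I)^{-1} = (I + \sigma^2 Q_\theta)^{-1}\Sigma_\theta^{-1} = (I + \sigma^2 Q_\theta)^{-1}Q_\theta$, which is the stated expression.

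There is no genuinely hard step here: the argument is purely algebraic and the key factorization is supplied in the text immediately preceding the statement. The only points I would be careful about are recording the invertibility of $I + \sigma^2 Q_\theta$ so that part (b) is well defined, and respecting factor order in the reversal rule for the inverse of a product. Even the latter is not a real concern in this instance, since $\Sigma_\theta$ and $I + \sigma^2 Q_\theta$ in fact commute, both being polynomials in $\Sigma_\theta$ and its inverse, so the two possible orderings of the factors coincide.
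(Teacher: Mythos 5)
Your proof is correct and follows exactly the route the paper intends: the proposition is presented as an immediate consequence of the factorization $\Sigma_\theta + \sigma^2 I = \Sigma_\theta(I + \sigma^2 Q_\theta)$ in \eqref{nuggetdecomp}, with (a) from multiplicativity of the determinant and (b) from inverting the product. Your added remarks on the invertibility of $I + \sigma^2 Q_\theta$ and the commutation of the factors are sound and, if anything, slightly more careful than the paper, which offers no explicit proof.
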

Equation \eqref{nuggetdecomp} and Proposition \ref{fullQtheorem} can be viewed as a matrix version of the decomposition of the likelihood $p(\bm{y}) = p(\bm{z})p(\bm{y} | \bm{z})/p(\bm{z} | \bm{y} )$, since the covariance matrix for $\bm{Z}$ is $\Sigma_\theta$, the covariance matrix for $\bm{Y} | \bm{Z}$ is $\sigma^2 I$, and the precision matrix for $\bm{Z} | \bm{Y}$ is $\sigma^{-2}I + Q_\theta$ \citep{lindgren2011explicit}. Exploiting this factorization requires one to know the dense matrix $Q_{11}$, but as a consequence of Theorem \ref{covariancetheorem}, $Q_{11}$ can be computed as
\begin{align*}
Q_{11} = \Sigma_{11}^{-1} + Q_{12}Q_{22}^{-1}Q_{21},
\end{align*}
which requires inversion of $\Sigma_{11}$, $m_n$ solves with the large sparse matrix $Q_{22}$, follwed by a multiplication with $Q_{12}$, which is also sparse. In practice, we have found that the $m_n$ solves are the most time-consuming step, but the solves can be completed in parallel.

After an appropriate reordering of the rows and columns, it is sometimes possible to compute and store the Cholesky factor of $I + \sigma^2 Q_\theta$ for the purpose of calculating its determinant and solving its linear system. Such reorderings typically place the $m_n$ partially neighbored observations last, and the resulting Cholesky factor is dense in the last $m_n$ rows. Thus the storage requirement is $O(n \times m_n)$. For large problems, this storage requirement can be prohibitively burdensome. The following proposition gives expressions for the determinant and inverse of $\Sigma_\theta + \sigma^2 I$ that can be used to evaluate the Gaussian loglikelihood function without storing any dense matrices larger than size $m_n \times m_n$.

\begin{proposition}\label{nuggettheorem}
Let $A := I + \sigma^2 Q_\theta$ and $B := A^{-1}$ be partitioned as in \eqref{partition}. Then
\begin{align*}
(a)& \quad \det( \Sigma_\theta + \sigma^2 I) = \det( \Sigma_\theta )\det(A_{22})\det( B_{11}^{-1} ) \\
(b)& \quad ( \Sigma_\theta + \sigma^2 I )^{-1} = 
\begin{bmatrix}
I & -\sigma^2 Q_{12}A_{22}^{-1} \\
0 & I
\end{bmatrix}
\begin{bmatrix}
B_{11} & 0 \\
0 & A_{22}^{-1}
\end{bmatrix}
\begin{bmatrix}
I & 0 \\
-\sigma^2 A_{22}^{-1}Q_{21}& I
\end{bmatrix}Q_\theta,
\end{align*}
and $B_{11}^{-1} = I + \sigma^2 Q_{11} - \sigma^4 Q_{12}A_{22}^{-1}Q_{21}$.
\end{proposition}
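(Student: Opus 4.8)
The plan is to reduce both statements to the single matrix $A := I + \sigma^2 Q_\theta$ via Proposition \ref{fullQtheorem}, and then exploit the block structure of $A$ induced by the permutation $P$ of \eqref{partition}. Writing $A$ in partitioned form gives $A_{11} = I + \sigma^2 Q_{11}$, $A_{12} = \sigma^2 Q_{12}$, $A_{21} = \sigma^2 Q_{21}$, and $A_{22} = I + \sigma^2 Q_{22}$. Because $Q_\theta = \Sigma_\theta^{-1}$ is symmetric positive definite and $\sigma^2 \geq 0$, both $A$ and its principal submatrix $A_{22}$ are symmetric positive definite, so $B = A^{-1}$, $A_{22}^{-1}$, and the Schur complement $S := A_{11} - A_{12}A_{22}^{-1}A_{21}$ all exist and $S$ is itself positive definite. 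Establishing this invertibility first is what makes the subsequent block manipulations legitimate.

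For part (a), I would apply the block-determinant identity $\det A = \det(A_{22})\det(S)$, and then invoke the standard fact that the leading block of the inverse of a symmetric positive definite block matrix equals the inverse of the Schur complement of the trailing block, i.e. $B_{11} = S^{-1}$, equivalently $B_{11}^{-1} = S$. Substituting $\det(S) = \det(B_{11}^{-1})$ into $\det(\Sigma_\theta + \sigma^2 I) = \det(\Sigma_\theta)\det(A)$ from Proposition \ref{fullQtheorem}(a) gives the claim. Evaluating $S$ explicitly with the block formulas above yields $S = I + \sigma^2 Q_{11} - \sigma^4 Q_{12}A_{22}^{-1}Q_{21}$, which is exactly the stated expression for $B_{11}^{-1}$; I would record it here since it is shared by both parts.

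For part (b), I would start from Proposition \ref{fullQtheorem}(b), namely $(\Sigma_\theta + \sigma^2 I)^{-1} = A^{-1}Q_\theta = B Q_\theta$, and then write $B$ through the block-$LDU$ factorization of $A$. Inverting $A = L\,\mathrm{diag}(S, A_{22})\,U$, with $L$ unit block lower-triangular and $U$ unit block upper-triangular, expresses $B$ as a product of a unit block triangular matrix, the block-diagonal matrix $\mathrm{diag}(B_{11}, A_{22}^{-1})$, and a second unit block triangular matrix whose off-diagonal blocks are built from $A_{12}A_{22}^{-1} = \sigma^2 Q_{12}A_{22}^{-1}$ and $A_{22}^{-1}A_{21} = \sigma^2 A_{22}^{-1}Q_{21}$. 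Right-multiplying this factored form of $B$ by $Q_\theta$ then yields the factored form of part (b).

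The routine parts are the two block identities for the determinant and inverse together with the substitutions $A_{ij} \mapsto \sigma^2 Q_{ij}$. The main obstacle is bookkeeping: correctly identifying the leading inverse block $B_{11}$ with $S^{-1}$ and, crucially, placing the two unit-triangular factors in the correct positions around $\mathrm{diag}(B_{11}, A_{22}^{-1})$, since the upper- and lower-triangular factors are not interchangeable and a swapped placement does not reconstruct $A^{-1}$. To guard against this, I would verify the assembled product by multiplying it back against $\Sigma_\theta + \sigma^2 I$ (or against $A$ before appending the final $Q_\theta$) and checking that it returns the identity, which simultaneously confirms the factor placement and the formula for $B_{11}^{-1}$.
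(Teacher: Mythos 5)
The paper itself offers no proof of this proposition---the appendix proves only Theorem 1 and the two lemmas, and the text merely remarks that the calculations are ``analogous to those that appear in Proposition 1''---so your derivation via Proposition \ref{fullQtheorem} and the block factorization of $A$ is exactly the intended route. Part (a) of your argument is correct and complete: $\det(\Sigma_\theta+\sigma^2 I)=\det(\Sigma_\theta)\det(A)$, $\det(A)=\det(A_{22})\det(S)$ with $S=A_{11}-A_{12}A_{22}^{-1}A_{21}$, and $B_{11}=S^{-1}$, with $S=I+\sigma^2 Q_{11}-\sigma^4 Q_{12}A_{22}^{-1}Q_{21}$ following from the substitutions $A_{11}=I+\sigma^2Q_{11}$, $A_{12}=\sigma^2Q_{12}$, $A_{21}=\sigma^2Q_{21}$. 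Your preliminary positive-definiteness remarks correctly justify all the inversions.

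For part (b), however, the bookkeeping you flag as the main obstacle is not merely a hazard---it is where both your sketch and the printed statement go wrong, in opposite directions. With $S$ the Schur complement of the \emph{trailing} block, the factorization of $A$ is
\begin{align*}
A=\begin{bmatrix} I & A_{12}A_{22}^{-1}\\ 0 & I\end{bmatrix}
\begin{bmatrix} S & 0\\ 0 & A_{22}\end{bmatrix}
\begin{bmatrix} I & 0\\ A_{22}^{-1}A_{21} & I\end{bmatrix},
\end{align*}
i.e.\ upper--diagonal--lower, not the ``$A=L\,\mathrm{diag}(S,A_{22})\,U$'' you wrote (that product has $S$, not $A_{11}$, in its $(1,1)$ block). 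Inverting therefore puts the \emph{lower} unit-triangular factor on the left of $\mathrm{diag}(B_{11},A_{22}^{-1})$ and the upper one on the right, which is the reverse of the order displayed in the proposition. The two orderings genuinely differ: the $(1,1)$ block of the printed product is $B_{11}+\sigma^4Q_{12}A_{22}^{-3}Q_{21}\neq B_{11}$, and a $2\times 2$ check (e.g.\ $Q=\bigl[\begin{smallmatrix}2&1\\1&2\end{smallmatrix}\bigr]$, $\sigma^2=1$) confirms the printed expression does not return $(\Sigma_\theta+\sigma^2I)^{-1}$. So the verification step you propose is not optional---carrying it out shows that your proof establishes the statement with the two outer factors transposed, which is the correct form (and is what the analogy with Proposition \ref{nonuggettheorem}(b) obscures, since there the diagonal's leading entry is $\Sigma_{11}^{-1}$ rather than the inverse of a Schur complement, so the upper factor correctly sits on the left). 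Part (a) is unaffected, since the unit-triangular factors have determinant one in either order.
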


Since $B_{11}$ is the same size as $\Sigma_{11}$ and $A_{22}$ is as sparse as $Q_{22}$, most of the calculations are analogous to those that appear in Proposition \ref{nonuggettheorem}, so we omit most of the details here, noting however that multiplications with $Q_\theta$ are fast since it is mostly sparse, and we first form the matrix $B_{11}^{-1}$, which is the Schur complement of $A_{22}$, then factor it to compute its determinant and solve linear systems with it. Further, when forming $B_{11}^{-1}$, we do not need to store the entire matrix $A_{22}^{-1}Q_{21}$, which is dense and of size $n\times m_n$. Rather, we complete the solves involving each column $Q_{21}[:,j]$ individually, either in sequence or in parallel, only storing $Q_{12}A_{22}^{-1}Q_{21}[:,j]$ one at a time, without ever having to store the entire $n \times m_n$ matrix $A_{22}^{-1}Q_{21}$.

\subsection{Kriging and Conditional Simulations}

The inverse results in Propositions \ref{nonuggettheorem}, \ref{fullQtheorem}, and \ref{nuggettheorem} can also be used to perform Kriging of the data to unobserved locations, since the computationally limiting step in Kriging is solving a linear system involving the covariance matrix and the data. Let $\bm{Y}_0$
be a vector of unobserved values that we wish to predict using the vector of observations $\bm{Y}$. The joint distribution of the two vectors is
\begin{align*}
\begin{bmatrix}
\bm{Y} \\
\bm{Y}_0
\end{bmatrix}
\sim
N \bigg(
\begin{bmatrix}
\bm{\mu}\\
\bm{\mu}_0
\end{bmatrix}
,
\begin{bmatrix}
\Sigma_\theta + \sigma^2 I & \Sigma_0\\
\Sigma_0^T & \Sigma_{00} + \sigma^2 I
\end{bmatrix}
\bigg),
\end{align*}
where $\bm{\mu}$ and $\bm{\mu}_0$ are the two mean vectors, $\Sigma_0$ is the cross covariance matrix between $\bm{Y}$ and $\bm{Y}_0$, and $\Sigma_{00} + \sigma^2 I$ is the covariance matrix for the unobserved values. Then the conditional expectation of $\bm{Y}_0$ given $\bm{Y}$ is
\begin{align*}
E(\bm{Y}_0 | \bm{Y} ) = \bm{\mu}_0 + \Sigma_0^T(\Sigma_\theta + \sigma^2 I)^{-1}(\bm{Y}-\bm{\mu})
\end{align*}
The inverse results are used to solve the linear system with $\Sigma_\theta + \sigma^2 I$, and the forward multiplication with $\Sigma_0$ can be computed with circulant embedding techniques.

Conditional simulations of $\bm{Y}_0$ given $\bm{Y}$ do not require much additional computational effort. The conditional simulations consist of the conditional expectation (computed above), added to a simulation of a conditional residual with covariance matrix $\Sigma_{00} + \sigma^2 I - \Sigma_0^T(\Sigma_\theta + \sigma^2 I)^{-1}\Sigma_0$. The conditional residual can be formed with standard methods that are not more computationally demanding than the original Kriging computations, as long as one can generate unconditional simulations. See for example \citet[Section 7.3.1]{chiles2012geostatistics}. We use circulant embedding for the unconditional simulations.

\subsection{Existing Approximations to $Q_\theta$}\label{likelihoodapproximations}

We consider some existing likelihood approximations as competitors to the methods for computing the exact likelihood. The likelihood approximations replace $Q_\theta = \Sigma_\theta^{-1}$ with an approximation $\Qt_\theta$, with the various approximations differing in how $\Qt_\theta[i,j]$ is defined when both $\bm{x}_i$ and $\bm{x}_j$ are partially neighbored, that is, how $\widetilde{Q}_{11}$ is defined. Thus each approximation can be viewed as a method for dealing with edge effects and missing interior values. We may prefer to use a likelihood approximation if it is faster to compute and provides similar parameter estimates to the exact maximum likelihood parameters. We describe three approximations here and evaluate their effectiveness in Section \ref{simulationsection}.

\vskip12pt

\noindent {\it No Adjustment}: The simplest approximation defines $\Qt_\theta[i,j] = \theta(\bm{x}_i, \bm{x}_j)$ for every $i$ and $j$. \cite{rue2005gaussian} prove that $\Qt_\theta$ is positive definite when defined in this way.

\vskip12pt

\noindent {\it Precision Adjustment}: When the conditional specification satisfies the diagonal dominance criterion
\begin{align*}
\theta(\bm{x}_i,\bm{x}_i) = \lambda \sum_{\bm{x}_j \in \Z^2 \setminus \bm{x}_i} |\theta(\bm{x}_i,\bm{x}_j)|
\end{align*}
with $\lambda \in [0,1)$, we set $\Qt_\theta[i,j] = \theta(\bm{x}_i,\bm{x}_j)$ for $i\neq j$ and
\begin{align*}
\Qt_\theta[i,i] = \lambda \sum_{\bm{x}_j \in \bm{x}_{1:n} \setminus \bm{x}_i} |\theta(\bm{x}_i,\bm{x}_j)|
\end{align*}
when both $\bm{x}_i$ and $\bm{x}_j$ are partially neighbored. Then $\Qt_\theta$ is symmetric and diagonally dominant, and thus positive definite.

\vskip12pt

\noindent {\it Periodic (Toroidal) Adjustment}: Let $\bm{x}_{1:n}$ be a complete rectangular subset of the integer lattice of dimensions $\bm{n} = (n_1,n_2)$, so that $n = n_1 n_2$. Then we set 
\begin{align*}
\Qt_\theta[i,j] &= \theta( \bm{x}_i - \bm{x}_j + \bm{k} \circ \bm{n} ), \\
k_{\ell} &= \mbox{arg min}_{k \in \{-1,0,1\}} |x_{\ell i} - x_{\ell j} + k n_\ell|
\end{align*}
for $\ell = 1,2$, where $\bm{k} = (k_1,k_2)$, and $\bm{k} \circ \bm{n} = (k_1 n_1,k_2 n_2)$.

\section{Simulation and Timing Experiments}\label{simulationsection}

In this section we study the maximum likelihood and maximum approximate likelihood parameter estimates in simulation and timing experiments. For the simulations, we choose to consider the case of no nugget, so that measurement error does not obscure the importance of edge effects. This allows us to study how the approximate edge correction techniques perform compared to inference using the exact likelihood. Existing MCMC and INLA methods are designed for Bayesian inferene, and thus are not applicable here. We do, however, consider the performance of INLA in the data analysis in Section \ref{datasection}.

We consider spectral densities of the form
\begin{align}\label{qmaternspecden}
f_\theta(\bm{\omega}) = \tau^{-2}\big(\kappa^2 + 4 - e^{i\omega_1} - e^{-i\omega_1} - e^{i\omega_2} - e^{-i\omega_2} \big)^{-\nu -1}
\end{align}
where $\theta = (\tau,\kappa,\nu)$ with $\tau,\kappa > 0$ and $\nu = 0,1,2,\ldots$. \cite{lindgren2011explicit} showed that the spectral density in \eqref{qmaternspecden} is explicitly linked to the spectral density of the \matern\ covariance function with integer smoothness parameter $\nu$ and inverse range parameter $\kappa$. When $\nu = 0$, we have $\theta(\bm{0}) = \tau^2(\kappa^2+4)$, $\theta(\bm{h}) = -\tau^2$ when $\bm{h} = (1,0)$, $(-1,0)$, $(0,1)$, or $(0,-1)$, and $0$ otherwise. The $\nu=0$ model has been called a symmetric first order autoregression \citep{besag1995conditional}, although an alternative parameterization is more common. 

The following simulation studies demonstrate that edge effects adversely affect parameter estimation, especially when the spatial correlation is strong. We consider the cases of $\nu=0$ and $\nu=1$ with $\kappa \in \{1/5,1/10,1/20\}$ and  We conduct 100 simulations on a grid of size $(100,100)$ for each of the six parameter combinations. We compare the maximum approximate likelihood estimates to the maximum likelihood estimates, computed with our efficient methods. For the $\nu=0$ case, we find that the precision adjustment provides the best approximation. For the $\nu=1$ case, we find that the periodic adjustment provides the best approximation, although no adjustment performs similarly. The results are presented in Figure \ref{simresults}. In Appendix \ref{simulationexamplesection}, we provide a figure containing simulated fields with these six parameter combinations. The precision adjustment introduces a small bias when $\nu =0$. When $\nu=1$, the periodic adjustment performs poorly in every case, with the bias of the estimates increasing as $\kappa$ decreases. The maximum likelihood estimates are nearly unbiased in every case.

\begin{figure}
\centering
\includegraphics[width=1\textwidth]{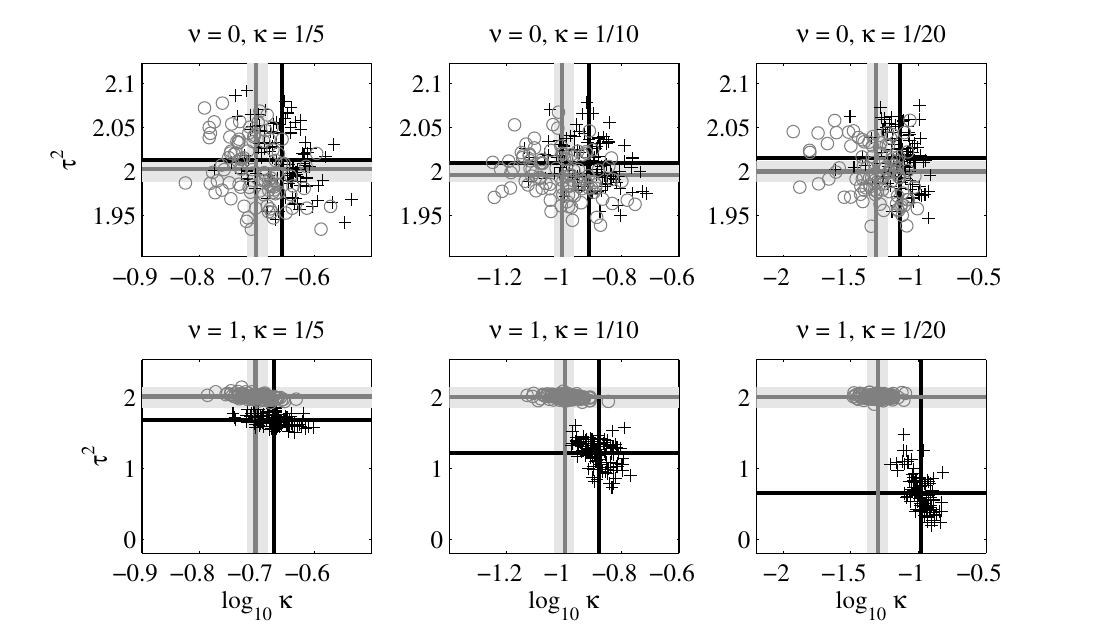}
\caption{Simulation results for maximum approximate likelihood estimates (black crosses) and maximum exact likelihood estimates (dark gray circles) for 100 simulated fields at each parameter combination. The thin solid lines indicate the sample means of the parameter estimates, with the sample mean of $\kappa$ taken on the log scale. The thick light gray lines indicate the true parameter values.}
\label{simresults}
\end{figure}

When there is no nugget in the model, the exact likelihoods not only produce more reliable parameter estimates, they often do not impose a significant additional computational burden over the calculation of the approximate likelihoods. In Table \ref{timingtable}, we present the results of a timing study for the $\nu=0$ and $\nu=1$ cases using no nugget and ``no adjustment'' neighborhood structure for the approximate likelihoods, which is the sparsest approximation. We compute the exact loglikelihood both with and without a small nugget. We use complete square lattices of increasing size, and time the computations using the ``clock'' and ``etime'' functions in Matlab. In both the $\nu = 0$ and the $\nu=1$ case with no nugget, the approximate likelihood is only 1.25 times faster than the exact
likelihood for the largest grid. Adding a nugget increases the time to compute the exact likelihood, but we note that the largest computations are infeasible without our methods, due to memory constraints on naive methods, and we have not parallelized the solves required for constructing $Q_{11}$. All computations are completed using Matlab R2013a on an Intel Core i7-4770 CPU (3.40GHz) with 32 GB of RAM.

\begin{table}
\centering
\begin{tabular}{ccccccc}
& \multicolumn{3}{c}{$\nu=0$} &\multicolumn{3}{c}{$\nu=1$} \\ 
& {Approximate} & Exact & Exact & Approximate & Exact & Exact\\
$n$ & $\sigma^2 = 0$ & $\sigma^2 = 0$ & $\sigma^2 = 0.01$ & $\sigma^2 = 0$ & $\sigma^2 = 0$ & $\sigma^2 = 0.01$ \\
\hline
$ 100^2 $ &   0.05 &   0.06 &   0.16 &   0.06 &   0.09 &   0.89 \\ 
$ 150^2 $ &   0.09 &   0.13 &   0.55 &   0.22 &   0.27 &   2.38 \\ 
$ 200^2 $ &   0.22 &   0.33 &   1.20 &   0.55 &   0.64 &   6.90 \\ 
$ 250^2 $ &   0.47 &   0.56 &   2.17 &   1.25 &   1.41 &  15.06 \\ 
$ 300^2 $ &   0.80 &   0.95 &   3.72 &   1.95 &   2.45 &  22.72 \end{tabular}
\caption{Time in seconds for one evaluation of the approximate likelihood and the exact likelihood, with and without a nugget term. }
\label{timingtable}
\end{table}

\section{Application to Method Comparison}\label{datasection}

The NASA Aqua satellite carries a moderate resolution imaging spectrometer (MODIS) capable of acquiring radiance data at high spatial resolution. The data are processed in order to obtain derived measurements of physical quantities relevant to land, ocean, and atmospheric dynamics, at various timescales. We analyze a set of aerosol optical thickness (AOT) data from Autumn 2014 over the Red Sea. The data were downloaded from NASA's OceanColor project website. Aerosol optical thickness is a unitless quantity describing the radiance attenuation caused by aerosols in the atmosphere. In Figure \ref{aotdata}, we plot the data, which consist of $21,921$ observations forming an incomplete regular grid. Even when gridded, satellite data often contain many missing values due to lack of coverage, atmospheric disturbances, or the quantity not being defined in certain regions (e.g. sea surface temperature over land).

The methods for computing the likelihood are useful both for estimating
parameters and for comparing models, allowing us to conduct model selection and to compare various methods for estimating parameters. We highlight both of these uses in the analysis of the AOT data. We model the data as
\begin{align*}
Y(\bm{x}) = \mu + Z(\bm{x}) + \varepsilon(\bm{x}),
\end{align*}
where $\mu \in \R$ is unknown and nonrandom, $Z(\cdot)$ is a GMRF with spectral density as in \eqref{qmaternspecden}, with $\nu = 1$, and unknown nonrandom parameters $\kappa,\tau > 0$, and $\varepsilon(\bm{x}) \sim \mbox{i.i.d. } N(0,\sigma^2)$. For this choice of $\nu = 1$, and this set of observation locations, we have $m_n = 3,469$. We estimate $\mu,\tau,\kappa$, and $\sigma^2$ by maximum likelihood, being careful to profile out $\mu$ and $\tau$, which have closed form expressions for the values that maximize the likelihood given fixed values of the other parameters. We also estimate $\mu$, $\tau$, and $\kappa$, fixing $\sigma^2 = 0$, allowing for a comparison between models with and without a nugget effect. Adding a nugget increased the loglikelihood by 594 units, indicating that the nugget significantly improves the model. The parameter estimates and loglikelihoods for these two models are given in Table \ref{dataanalysistable}.

\begin{figure}
\centering
\includegraphics[width=\textwidth]{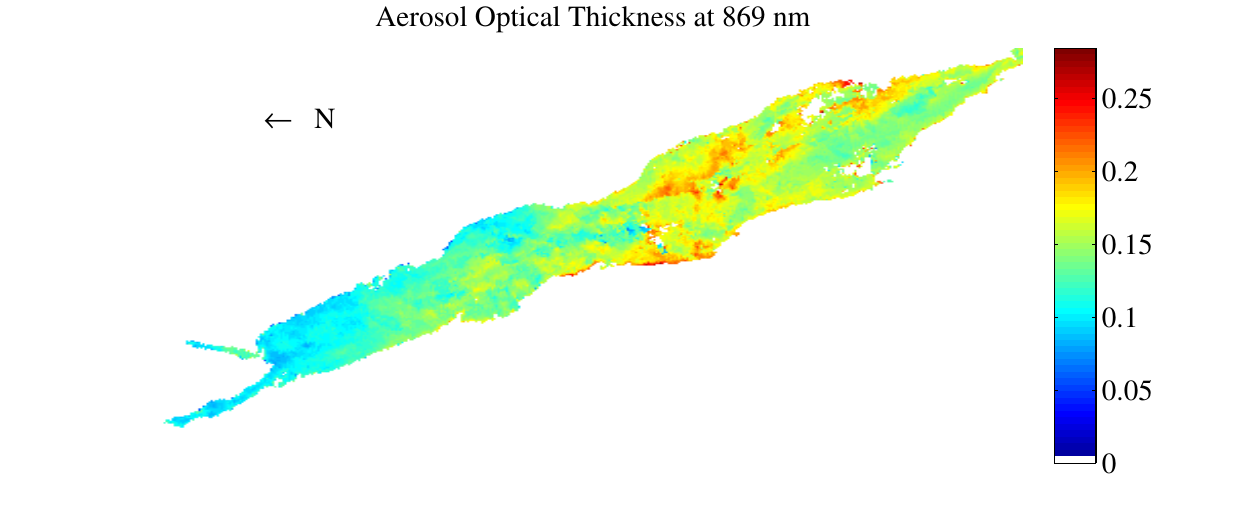}
\caption{Aerosol optical thickness in the Red Sea in Autumn 2014. White pixels are missing values.}
\label{aotdata}
\end{figure}

Having access to the exact likelihood allows for likelihood ratio comparisons, a convenient way to compare parameter estimates found using various approximate methods. We estimate parameters with two well-known likelihood approximations: the independent blocks likelihood approximation \citep{stein2014limitations}, and an approximation based on an
incomplete inverse Cholesky factorization \citep{vecchia1988estimation,stein2004approximating}. We also
estimate parameters using the R-INLA software (www.r-inla.org), described in \cite{rue2009approximate} and \cite{martins2013bayesian}, using a stochastic partial differential equation (SPDE) approximation \citep{lindgren2011explicit}. Since INLA is a Bayesian method, it does not return maximum likelihood estimates, but it is worthwhile to consider whether the INLA maximum posterior parameter estimates nearly maximize the likelihood, since INLA is designed specifically to estimate models with a latent GMRF.

There are several decisions that must be made when using the approximate methods. With the independent blocks method, we must choose the size and shape of the blocks. We partition the Red Sea according to the orientation in Figure \ref{partitionfigure}, so that each block has roughly 1600 observations, allowing the covariance matrices for the individual blocks to be stored and factored using standard dense methods. With the incomplete inverse Cholesky method, we choose smaller blocks, always using the adjacent block in the northwest direction as the conditioning set. In R-INLA, there are several decisions that must be made, including whether to force the nodes of the SPDE mesh to coincide with the observation locations, and how fine a mesh to use. And of course the priors must be chosen; we used the default priors. We did not enforce the mesh nodes to coincide with the grid. We tried values of 2, 4, and 6 for the ``max.edge'' option, with smaller values of max.edge giving finer meshes and better parameter estimates. 

\begin{figure}
\centering
\includegraphics[width=0.9\textwidth]{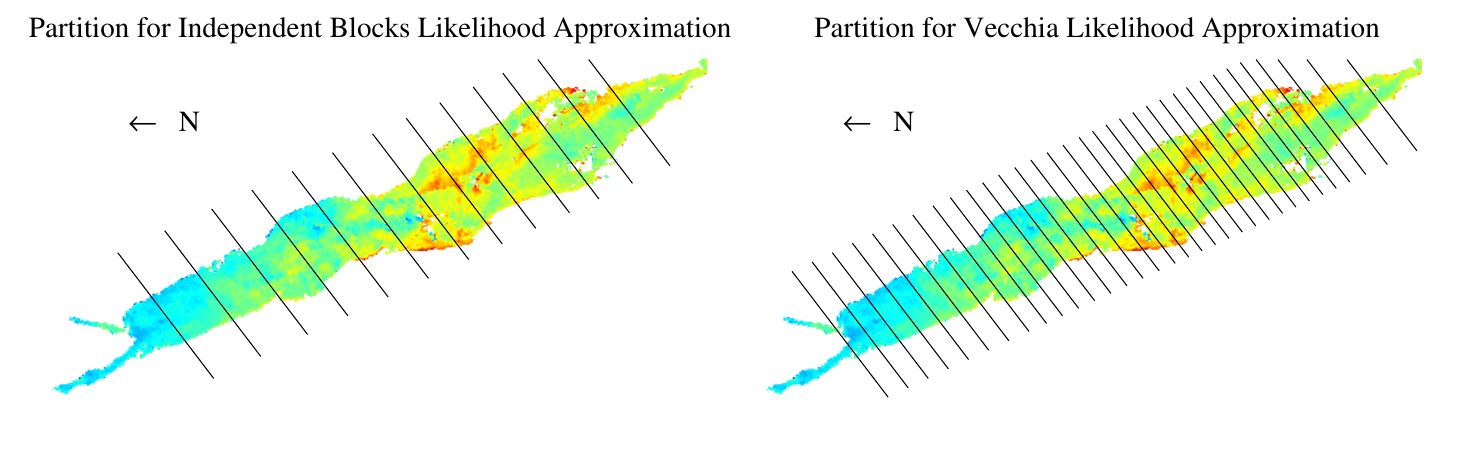}
\caption{Partitions used for the independent blocks approximation and incomplete inverse Cholesky approximation (Vecchia Approximation).}
\label{partitionfigure}
\end{figure}

The results of the model fitting are summarized in Table 2. The use of the exact likelihood to estimate the model delivers estimates in just 8.40 minutes, which is a major step forward for exact likelihood methods for a dataset of this size. The time required for INLA estimates depends on the resolution of the mesh, with finer meshes requiring more time. Estimating the parameters with the finest mesh took 33.2 minutes, nearly four times longer than generating the maximum likelihood estimates. The coarser meshes were faster than maximum likelihood, but the quality of the parameter estimates found with INLA depends on the mesh, with coarser meshes producing worse parameter estimates. The finest mesh that ran faster than maximum likelihood gave estimates whose loglikelihood is 795 units below the maximum loglikelihood, and decreasing the resolution further gave a loglikelihood more than 1300 units below the maximum loglikelihood, so it appears that the decisions made in constructing the mesh are very important. Comparisons with INLA should be made with caution, since it is a Bayesian method, but it is interesting to note that, in terms of loglikelihoods, the finest mesh is outperformed by the much simpler independent blocks approximation, which required much less computational effort. The incomplete inverse Cholesky approximation was better still and slightly slower than independent blocks. In INLA, we tried experimenting with different priors and did not see much dependence on the priors.

\begin{table}
\centering
\begin{tabular}{ccccccc}
method & $\mu$ & $\tau$ & $\kappa$ & $\sigma$ & $\Delta$loglik & time (min) \\
\hline
exact with nugget & 0.1464 &  51.25 &  0.2095 &  0.0048 & 0  & 8.40\\
inc.\ inv.\ Chol.\ & 0.1468 & 49.29 & 0.2094 & 0.0050 & $-33.4$ & 5.23\\
ind.\ blocks & 0.1485 & 49.12 & 0.2054 & 0.0051 & $-54.3$ & 4.15\\ 
INLA, max.edge = 2 & 0.1471 &  58.19 &  0.1715 &  0.0061 &$-158.9$& 33.2\\
exact no nugget & 0.1462 & 33.63 & 0.3648 & 0 & $-594.0$ & 0.41 \\
INLA, max.edge = 4 & 0.1490 &  79.02 &  0.1126 &  0.0077 &$-795.4$& 3.23\\
INLA, max.edge = 6 & 0.1497 &  88.49 &  0.0855 &  0.0085 &$-1344.1$ & 1.23
\end{tabular}
\caption{Results of analysis of Red Sea data. Loglikelihood differences are from maximum likelihood model.}
\label{dataanalysistable}
\end{table}

\section{Discussion}\label{discussionsection}

The Gaussian Markov random field, especially when paired with a nugget effect, is a powerful tool for modeling spatial data. This paper provides theoretica support for and generalizes a method proposed by \cite{rue2005gaussian} for computing the exact likelihood. The methods naturally handle edge effects, which we demonstrate can cause serious problems for estimating parameters if not treated carefully. The availability of the likelihood allows for both frequentist and Bayesian inference, and it allows for likelihood ratio comparisons among parameter estimates obtained using various approximate methods. The methods for computing the likelihood require forming the covariance matrix for the partially neighbored observations, and we show that the covariances can be computed efficiently and highly accurately with FFT algorithms. We make use of decompositions that exploit the fact that the inverse of the covariance matrix is mostly known and sparse. The likelihood calculations can be considered exact in the same sense that any calculation involving \matern\ covariances is considered exact, since evaluation of the \matern\ covariance function involves series expansions for the modified Bessel function.

The computational limits of the methods are determined by $m_n$, the number of partially neighbored observations, since we must construct and manipulate dense matrices of size $m_n \times m_n$. For complete square grids, this allows the computation of the likelihoods with more than one million observations. However, the methods are not able to handle situations in which $m_n$ is much larger than $10^4$, which arises when the observation grid has many scattered missing values. In this situation, using an iterative estimation method that relies on conditionally simulating the missing values given the data \citep{stroud2014bayesian,guinnesscirculant} is appropriate. Then the methods described in this paper can be used to calculate the likelihood for the entire imputed dataset.

\begin{center}
{\large \bf Acknowledgements}
\end{center}

The first author acknowledges support from the National Science Foundation's Division of Mathematical Sciences under grant number 1406016. The second author acknowledges the support from the XDATA Program of the Defense Advanced Research Projects Agency (DARPA), administered through Air Force Research Laboratory contract FA8750-12-C-0323 FA8750-12-C-0323.

\appendix

\section{Numerical Study}\label{covcomputeappendix}

We demonstrate that the stationary covariances for the random fields that we consider can be computed to high accuracy with small values of $J = 2$ or $3$. To study the calculations, for each value of $J \in \{1,2,3,4,5\}$, we compute the approximation in \eqref{riemanncalc}, and we record
\begin{align*}
\Delta_J = \max_{\bm{h}} \Big| K_\theta(\bm{x},\bm{x}+\bm{h}; \, J,\bm{n}) - K_\theta(\bm{x},\bm{x}+\bm{h}; \, J+1,\bm{n}) \Big|, 
\end{align*}
where the maximum absolute difference is taken over $\bm{h}$ on a grid of size $\bm{n} = (100,100)$, for $\nu \in \{0,1\}$, and $\kappa \in \{ 1/5,1/10,1/20 \}$, in order to represent a wide set of models. The results are plotted in Figure \ref{covcomputefigure}. We see that the calculations converge very rapidly, with $\Delta_3$ less than $10^{-10}$ for every parameter combination.

\begin{figure}[ht]
\centering
\includegraphics[width=0.7\textwidth]{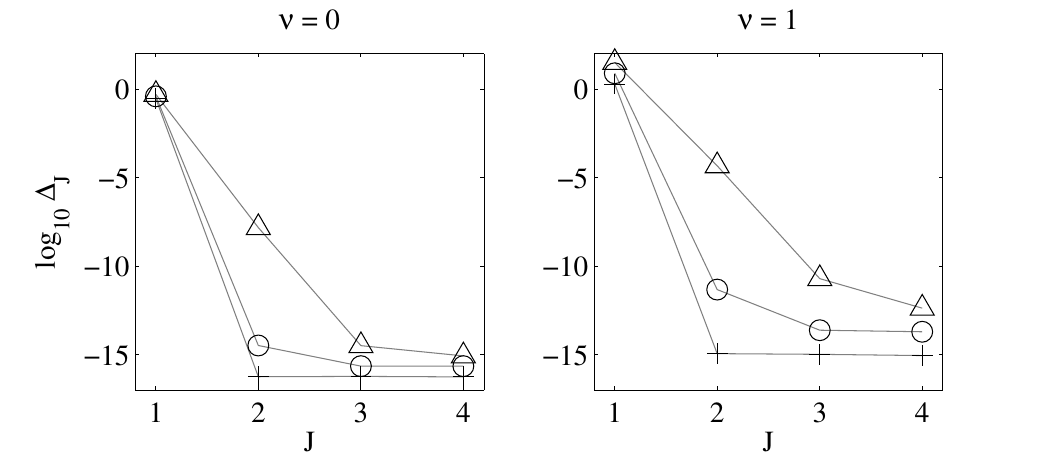}
\caption{$(+)$ represents $\kappa = 1/5$, $(\circ)$ represents $\kappa = 1/10$, and $(\triangle)$ represents $\kappa = 1/20$.}
\label{covcomputefigure}
\end{figure}

\section{Proofs}\label{proofsection}
\setcounter{theorem}{0}
\begin{theorem}
If $f(\bm{\omega})$ is the spectral density for a Markov random field on $\Z^2$ and is bounded above, then for $J \geq 2$ and any $p \in \N$, there exists a constant $C_p < \infty$ such that
\begin{align*}
\big| K_\theta(\bm{x},\bm{y}) - K_\theta(\bm{x},\bm{y};\,J)| \leq \frac{C_p} {(n_1J)^{p}}.
\end{align*}
\end{theorem}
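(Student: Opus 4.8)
The plan is to recognize the Riemann sum \eqref{riemanncalc} as an \emph{aliased} version of the exact Fourier coefficients of $f_\theta$, so that the approximation error becomes an infinite sum of covariances at large lags, and these decay rapidly because the Markov hypothesis forces $f_\theta$ to be smooth. Write $\bm{h} = \bm{x}-\bm{y}$ and $K(\bm{h}) := K_\theta(\bm{x},\bm{y})$. The spectral representation \eqref{spectralrepresentation} says precisely that $K(\bm{h})$ is the $\bm{h}$-th Fourier coefficient of $f_\theta$; equivalently $f_\theta(\bm{\omega}) = (4\pi^2)^{-1}\sum_{\bm{k}\in\Z^2} K(\bm{k})\,e^{-i\bm{\omega}'\bm{k}}$.

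First I would substitute this Fourier series for $f_\theta$ into the defining sum \eqref{riemanncalc}. Writing $\bm{N} = (n_1 J,\, n_2 J)$ for the grid dimensions, the grid frequencies are $\bm{\omega}_{\bm{j}} = (2\pi j_1/N_1,\, 2\pi j_2/N_2)$ and $n_1 n_2 J^2 = N_1 N_2$, so after interchanging the finite sum over $\bm{j}$ with the absolutely convergent sum over $\bm{k}$ the inner factor becomes the geometric sum $\sum_{j_\ell=0}^{N_\ell-1} e^{2\pi i j_\ell (h_\ell - k_\ell)/N_\ell}$, which equals $N_\ell$ when $N_\ell \mid (h_\ell - k_\ell)$ and vanishes otherwise. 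This discrete orthogonality collapses the double sum to the aliasing identity
\[
K_\theta(\bm{x},\bm{y};\,J,\bm{n}) = \sum_{\bm{m}\in\Z^2} K\big(\bm{h} + \bm{N}\circ\bm{m}\big), \qquad \bm{N}\circ\bm{m} = (N_1 m_1,\, N_2 m_2),
\]
so that the error is exactly the tail $K_\theta(\bm{x},\bm{y};\,J,\bm{n}) - K(\bm{h}) = \sum_{\bm{m}\neq\bm{0}} K(\bm{h}+\bm{N}\circ\bm{m})$.

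Next I would establish rapid decay of $K(\cdot)$. Because the field is Markov, $\eta$ has finite support, so the denominator in \eqref{spectralrepresentation} is a trigonometric polynomial and hence $C^\infty$ on the torus; the hypothesis that $f_\theta$ is bounded above forces this denominator to be bounded below by a positive constant, so $f_\theta = 1/\!\sum_{\bm{h}}\eta(\bm{h})e^{i\bm{\omega}'\bm{h}}$ is itself $C^\infty$ (indeed real-analytic) with all derivatives bounded on the compact torus. Integrating the Fourier-coefficient integral by parts $q$ times in each of $\omega_1$ and $\omega_2$ (no boundary terms, by periodicity, with $\partial_{\omega_1}^q\partial_{\omega_2}^q f_\theta$ bounded) yields the product bound $|K(\bm{g})| \leq C_q\,(1+|g_1|)^{-q}(1+|g_2|)^{-q}$ for every $q\in\N$. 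Then I would combine the two ingredients. The role of $J\geq 2$ is that every admissible lag satisfies $|h_\ell| \leq n_\ell - 1 < n_\ell = N_\ell/J \leq N_\ell/2$, so for $m_\ell\neq 0$ one has $|h_\ell + N_\ell m_\ell| \geq N_\ell|m_\ell| - |h_\ell| > N_\ell|m_\ell|/2$. The alias sum factors across coordinates, and each one-dimensional off-origin part $\sum_{m_\ell\neq 0}(1+|h_\ell+N_\ell m_\ell|)^{-q}$ is bounded by $(2/N_\ell)^q$ times a convergent series, hence is $O(N_\ell^{-q}) \leq O((n_1 J)^{-q})$ since $n_1 < n_2$ gives $N_1 \leq N_2$. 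Taking $q = p$ produces a finite constant $C_p$ with $|K(\bm{h}) - K_\theta(\bm{x},\bm{y};\,J,\bm{n})| \leq C_p\,(n_1 J)^{-p}$.

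I expect the main obstacle to be the aliasing identity in the second step: justifying the interchange of the finite grid sum with the Fourier series (which relies on the absolute summability of $K(\bm{k})$ supplied by the smoothness argument) and evaluating the geometric sums cleanly. Once the error is written in aliased form the decay estimate is routine; the only care needed there is to keep the constant quantitative and to invoke $J\geq 2$ to hold the nearest aliased lag at distance of order $n_1 J$, which is exactly what converts the super-algebraic decay of the covariances into the stated rate in $n_1 J$.
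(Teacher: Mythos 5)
Your proposal is correct and follows essentially the same route as the paper: rewrite the error via the aliasing identity $K_\theta(\bm{h};J,\bm{n}) = \sum_{\bm{m}} K(\bm{h}+\bm{N}\circ\bm{m})$, deduce super-polynomial decay of $K$ from the smoothness of $f_\theta$ (Markov $\Rightarrow$ $1/f_\theta$ is a trigonometric polynomial, bounded above $\Rightarrow$ $f_\theta\in C^\infty$), and use $J\geq 2$ to keep every nonzero alias at distance of order $n_1 J$. The only cosmetic differences are that you derive the aliasing identity directly from discrete orthogonality where the paper cites it from earlier work, and you use a product-form decay bound $|K(\bm{g})|\leq C_q(1+|g_1|)^{-q}(1+|g_2|)^{-q}$ with coordinate-wise summation, whereas the paper invokes a radial bound $|K(\bm{r})|\leq M_p\|\bm{r}\|^{-p}$ and counts aliases in square shells; both yield the stated rate.
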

\begin{proof}
Since the random field is Markov, $1/f(\bm{\omega})$ can be written as a finite sum of complex exponentials and must be infinitely differentiable, and thus $f(\bm{\omega})$ is also infinitely differentiable, since it is assumed to be bounded above. Since the process is stationary, we write $\bm{h} = \bm{x} - \bm{y}$, and $K(\bm{x},\bm{y}) = K(\bm{h})$ to simplify the expressions. \cite{guinnesscirculant} proved that
\begin{align*}
K( \bm{h};\, J,\bm{n} ) = \sum_{\bm{k} \in \Z^2} K( \bm{h} + (J \bm{n} \circ \bm{k})),
\end{align*}
where $\bm{n} \circ \bm{k} = (n_1k_1,n_2k_2)$. Thus 
\begin{align*}
K_\theta(\bm{h}) - K_\theta(\bm{h};\,J,\bm{n}) = \sum_{\bm{k} \neq \bm{0}} K( \bm{h} + (J \bm{n} \circ \bm{k})). 
\end{align*}
Since $f(\bm{\omega})$ is $p \geq 3$ times continuously differentiable, there exists constant $M_p < \infty$ such that $|K(\bm{r})| \leq M_p \| \bm{r} \|^{-p}$, where $\| \cdot \|$ is Euclidean distance \citep[Lemma 9.5]{korner1989fourier}. Then the error can be bounded by
\begin{align*}
|K_\theta(\bm{h}) - K_\theta(\bm{h};\,J,\bm{n})| &\leq \sum_{\bm{k} \neq \bm{0}} |K( \bm{h} + (J \bm{n} \circ \bm{k}))| \\
 & \leq 3 M_p \left( (J-1) n_1 \right)^{-p} + \sum_{j=1}^{\infty} 4(2j+1) M_p (jJn_1)^{-p}\\
 &= 3 M_p \left( (J-1)n_1 \right)^{-p} + 4M_p(Jn_1)^{-p}\sum_{j=1}^{\infty} (2j+1)(j)^{-p}\\
  & \leq 3 M_p \left( (J/2)n_1 \right)^{-p} + 4M_p(Jn_1)^{-p}\sum_{j=1}^{\infty} (2j+1)(j)^{-p}\\
  & \leq \left[ 3 M_p \left( 1/2 \right)^{-p} + 4M_p\sum_{j=1}^{\infty} (2j+1)(j)^{-p} \right] (Jn_1)^{-p}
\end{align*}
Setting $C_p$ equal to the quantity in square brackets establishes the bound for $p \geq 3$. The $p=1$ and $p=2$ cases follow by setting $C_1 = C_3$ and $C_2 = C_3$, since $(Jn_1)^{-3} < (Jn_1)^{-2} < (Jn_1)^{-1}$
\end{proof}

We believe the following are known results but were unable to find proofs, so we prove them here for completeness. Lemma 2 is used in the proof of Lemma 1.

\setcounter{lemma}{0}
\begin{lemma}
Let $\bm{x}_{1:n}$ be a finite set of observation locations for the infinite GMRF $Z(\cdot)$ specified by $\theta$, let $Q_\theta = \Sigma_\theta^{-1}$, the inverse of the covariance matrix for $\bm{Z}$, and let $\bm{x}_i$ and $\bm{x}_j$ be two observation locations. If either $\bm{x}_i$ or $\bm{x}_j$ is fully neighbored, then $Q_\theta[i, j] = \theta(\bm{x}_i,\bm{x}_j)$.
\end{lemma}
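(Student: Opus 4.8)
The plan is to exploit the one-to-one correspondence between the entries of a Gaussian precision matrix and the parameters of the univariate full conditional distributions, combined with the Markov property encoded in the fully neighbored hypothesis. First I would record the standard auxiliary fact (the sort of statement provided by the Lemma 2 referenced above) that if the nondegenerate Gaussian vector $\bm{Z}$ has precision matrix $Q_\theta = \Sigma_\theta^{-1}$, then for each $i$,
\[
Z(\bm{x}_i) \mid \{Z(\bm{x}_k): k \neq i\} \sim N\Big(-\tfrac{1}{Q_\theta[i,i]}\sum_{k \neq i} Q_\theta[i,k]\, Z(\bm{x}_k),\ \tfrac{1}{Q_\theta[i,i]}\Big).
\]
This reduces the lemma to identifying the coefficients of this \emph{finite} conditional law.

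Next I would invoke the Markov property. By definition, the infinite full conditional \eqref{condspecfull} of $Z(\bm{x}_i)$ depends only on the neighbors $S_\theta(\bm{x}_i)$, since $\theta(\bm{x}_i,\bm{y}) = 0$ off that set. When $\bm{x}_i$ is fully neighbored we have $S_\theta(\bm{x}_i) \subset \bm{x}_{1:n}$, so $Z(\bm{x}_i)$ is conditionally independent of every other observation outside $S_\theta(\bm{x}_i)$ given the neighboring observations. Consequently the finite conditional $Z(\bm{x}_i) \mid \{Z(\bm{x}_k): k \neq i\}$ coincides with the infinite full conditional, namely $N\big(-\sum_{\bm{y}} \theta(\bm{x}_i,\bm{y})/\theta(\bm{x}_i,\bm{x}_i)\, Z(\bm{y}),\ 1/\theta(\bm{x}_i,\bm{x}_i)\big)$, where the sum effectively runs over the observed neighbors.

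I would then match the two descriptions of this single conditional distribution. Equating the conditional variances forces $Q_\theta[i,i] = \theta(\bm{x}_i,\bm{x}_i)$, and equating the mean coefficients (the denominators now being equal) gives $Q_\theta[i,k] = \theta(\bm{x}_i,\bm{x}_k)$ for every $k$, with both sides vanishing when $\bm{x}_k \notin S_\theta(\bm{x}_i)$. This settles the claim whenever $\bm{x}_i$ is fully neighbored. To obtain the stated ``either/or'' form I would finish by symmetry: $Q_\theta$ is symmetric because it is the inverse of a covariance matrix, and $\theta$ is symmetric because consistency of the conditional specification requires $\theta(\bm{x},\bm{y}) = \theta(\bm{y},\bm{x})$; hence if instead $\bm{x}_j$ is fully neighbored, the row-$j$ argument together with a transpose yields $Q_\theta[i,j] = Q_\theta[j,i] = \theta(\bm{x}_j,\bm{x}_i) = \theta(\bm{x}_i,\bm{x}_j)$.

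The main obstacle is the second step: rigorously passing from the infinite full conditional (conditioning on all other lattice sites) to the finite conditional (conditioning only on the other observations). The clean way is to phrase it as a conditional-independence statement, namely that $Z(\bm{x}_i)$ is independent of the observations outside $S_\theta(\bm{x}_i)$ given those inside $S_\theta(\bm{x}_i)$, so that the conditioning set may be enlarged or shrunk among collections that all contain $S_\theta(\bm{x}_i)$ without changing the law. This keeps the argument finite-dimensional and sidesteps any measure-theoretic delicacy about conditioning on the infinitely many unobserved sites.
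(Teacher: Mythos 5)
Your proposal is correct and follows essentially the same route as the paper: identify the finite full conditional of $Z(\bm{x}_i)$ from the precision matrix, identify it again from the infinite conditional specification using the fully neighbored hypothesis, equate coefficients, and finish with the symmetry of $Q_\theta$ and $\theta$ to cover the ``either/or'' case. The only difference is one of emphasis --- you make explicit the conditional-independence justification for replacing the infinite conditioning set by the finite one, a step the paper's proof passes over implicitly when it writes down \eqref{theorem1_2}.
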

\begin{proof}
Using the form of the density function of the multivariate normal, the density function of $Z(\bm{x}_i)$ given $\bm{Z}_{-i}$ ($\bm{Z}$ with the $i$th element removed) is
\begin{align}\label{theorem1_1}
p(z(\bm{x}_i) | \bm{z}_{-i}) &\propto \exp \Big( -\frac{1}{2}\bm{z}'Q_\theta \bm{z} \Big) \notag\\
 & \propto \exp \bigg( -\frac{1}{2}z(\bm{x}_i)^2 Q_\theta[i,i] - z(\bm{x}_i)\sum_{j\neq i}Q_\theta[i,j]z(\bm{x}_j)  \bigg),
\end{align}
where $\bm{z} = (z(\bm{x}_1),\ldots,z(\bm{x}_n))$ is the argument for the joint density for $\bm{Z}$, and $\bm{z}_{-i}$ is $\bm{z}$ with the $i$th element removed. Suppose, without loss of generality, that $\bm{x}_i$ is fully neighbored. Then, using the conditional specification in \eqref{condspecfull} and the form of the univariate normal distribution, the density function of $Z(\bm{x}_i)$ given $\bm{Z}_{-i}$ is
\begin{align}\label{theorem1_2}
p(z(\bm{x}_i) | \bm{z}_{-i}) \propto \exp \bigg( -\frac{1}{2}z(\bm{x}_i)^2 \theta(\bm{x}_i,\bm{x}_i) - z(\bm{x}_i)\sum_{j\neq i}\theta(\bm{x}_i,\bm{x}_j)z(\bm{x}_j)  \bigg).
\end{align}
Comparing \eqref{theorem1_1} and \eqref{theorem1_2} proves that $Q_\theta[i,j] = \theta(\bm{x}_i,\bm{x}_j)$, since \eqref{theorem1_1} and \eqref{theorem1_2} must be equal for any $\bm{z}_{-i} \in \R^{n-1}$. The symmetry of $Q_\theta$ and $\theta$ (Lemma \ref{symmetriclemma}) implies that $Q_\theta[j,i] = Q_\theta[i,j] = \theta(\bm{x}_i,\bm{x}_j) = \theta(\bm{x}_j,\bm{x}_i)$, which establishes the stated result.
\end{proof}

\setcounter{lemma}{1}
\begin{lemma}\label{symmetriclemma}
If $\theta$ specifies a valid GMRF, $\theta(\bm{x}_i,\bm{x}_j) = \theta(\bm{x}_j,\bm{x}_i)$ for all $\bm{x}_i,\bm{x}_j \in \Z^2$.
\end{lemma}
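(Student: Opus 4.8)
The plan is to reduce the claim to the symmetry of a precision matrix on a carefully chosen finite block, reusing the conditional-matching computation from the proof of Lemma 1 (but not Lemma 1 itself, to avoid circularity). Fix two locations $\bm{x}_i,\bm{x}_j\in\Z^2$. Since the field is Markov, both $S_\theta(\bm{x}_i)$ and $S_\theta(\bm{x}_j)$ are finite, so I can enlarge $\{\bm{x}_i,\bm{x}_j\}$ to a finite set of locations $\bm{x}_{1:n}$ that contains $S_\theta(\bm{x}_i)\cup S_\theta(\bm{x}_j)$; with respect to this block, both $\bm{x}_i$ and $\bm{x}_j$ are fully neighbored. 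Because $Z(\cdot)$ is a valid GMRF, the marginal law of $\bm{Z}=(Z(\bm{x}_1),\dots,Z(\bm{x}_n))'$ is multivariate normal with some covariance matrix $\Sigma_\theta$ and precision matrix $Q_\theta=\Sigma_\theta^{-1}$, which is symmetric since $\Sigma_\theta$ is.

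First I would invoke the Markov property to identify the block conditionals with the full conditionals of \eqref{condspecfull}. Because $S_\theta(\bm{x}_i)$ is contained in the block, $Z(\bm{x}_i)$ is conditionally independent of everything outside the block given its neighbors, so the law of $Z(\bm{x}_i)$ given the remaining block variables coincides with the full conditional in \eqref{condspecfull}. Matching this against the Gaussian conditional read off from $\exp(-\tfrac12\bm{z}'Q_\theta\bm{z})$---exactly the comparison of \eqref{theorem1_1} and \eqref{theorem1_2}---the coefficient of the $z(\bm{x}_i)z(\bm{x}_j)$ cross term gives $Q_\theta[i,j]=\theta(\bm{x}_i,\bm{x}_j)$. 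Crucially, this step only uses the symmetry of $Q_\theta$ (which is automatic) and does not presuppose the symmetry of $\theta$, so there is no circularity.

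Next I would run the identical argument with the roles of $i$ and $j$ interchanged. Since $\bm{x}_j$ is also fully neighbored in the block, its full conditional matches the Gaussian form and yields $Q_\theta[j,i]=\theta(\bm{x}_j,\bm{x}_i)$. Combining the two identities with the symmetry $Q_\theta[i,j]=Q_\theta[j,i]$ of the precision matrix gives $\theta(\bm{x}_i,\bm{x}_j)=Q_\theta[i,j]=Q_\theta[j,i]=\theta(\bm{x}_j,\bm{x}_i)$, which is the assertion. Equivalently, one can phrase this via the joint log-density: $\partial \log p/\partial z(\bm{x}_i)=-\sum_k\theta(\bm{x}_i,\bm{x}_k)z(\bm{x}_k)$, so $-\theta(\bm{x}_i,\bm{x}_j)$ and $-\theta(\bm{x}_j,\bm{x}_i)$ are the two mixed second partials of $\log p$, equal by Clairaut's theorem.

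The main obstacle I anticipate is purely bookkeeping: making sure the reduction from the infinite lattice to a finite block is legitimate and, above all, that the argument does not secretly invoke Lemma 1. The key is that here I am free to choose the block so that \emph{both} $\bm{x}_i$ and $\bm{x}_j$ are fully neighbored, which lets me read off both $Q_\theta[i,j]$ and $Q_\theta[j,i]$ directly from conditional specifications; Lemma 1, by contrast, concerns an arbitrary prescribed set of locations in which only one of the two points may be fully neighbored, and it is precisely there that the present symmetry result is needed.
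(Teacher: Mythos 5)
Your proof is correct and follows essentially the same route as the paper's: enlarge $\{\bm{x}_i,\bm{x}_j\}$ to a finite block containing $S_\theta(\bm{x}_i)$ and $S_\theta(\bm{x}_j)$, match the Gaussian conditional read off from $\exp(-\tfrac12\bm{z}'Q_\theta\bm{z})$ against the conditional specification to get $Q_\theta[i,j]=\theta(\bm{x}_i,\bm{x}_j)$ and $Q_\theta[j,i]=\theta(\bm{x}_j,\bm{x}_i)$, and conclude from the symmetry of $Q_\theta$. Your explicit attention to avoiding circularity with Lemma 1 is a welcome clarification of a point the paper leaves implicit.
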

\begin{proof}
Let $\bm{x}_{1:n}$ be a set of observation locations that includes $\bm{x}_i$, $\bm{x}_j$, $S_\theta(\bm{x}_i)$, and $S_\theta(\bm{x}_j)$. The conditional density of $Z(\bm{x}_i)$ given $\bm{Z}_{-i}$ is
\begin{align}\label{lemma1_1}
p( z(\bm{x}_i) | \bm{z}_{-i}) &\propto \exp \bigg( -\frac{1}{2}\bm{z}'Q_\theta \bm{z} \bigg) \notag \\
&\propto \exp \bigg( -\frac{1}{2} z(\bm{x}_i)^2Q_\theta[i,i] - z(\bm{x}_i) \sum_{j\neq i} Q_\theta[i,j] z(\bm{x}_j) \bigg).
\end{align}
Using the form of the univariate normal distribution and the conditional mean and variance of $Z(\bm{x}_i)$ given its neigbors, the conditional density of $Z(\bm{x}_i)$ given $\bm{x}_{1:n}$ is
\begin{align}\label{lemma1_2}
p( z(\bm{x}_i) | \bm{z}_{-i}) \propto \exp \bigg( -\frac{1}{2} z(\bm{x}_i)^2\theta(\bm{x}_i,\bm{x}_i) - z(\bm{x}_i) \sum_{j\neq i} \theta(\bm{x}_i,\bm{x}_j) z(\bm{x}_j) \bigg).
\end{align}
Comparing \eqref{lemma1_1} and \eqref{lemma1_2} and equating coefficients gives $Q_\theta[i,j] = \theta(\bm{x}_i,\bm{x}_j)$. Repeating the same steps for $\bm{x}_j$ gives $Q_\theta[j,i] = \theta(\bm{x}_j,\bm{x}_i)$. The symmetry of $Q_\theta$ implies that $\theta(\bm{x}_i,\bm{x}_j) = \theta(\bm{x}_j,\bm{x}_i)$.
\end{proof}

\section{Example Plots of Simulated Data}\label{simulationexamplesection}

Figure \ref{simexamples} shows realizations from the six models considered in the simulation study from Section \ref{simulationsection}.

\begin{figure}[ht]
\centering
\includegraphics[width=\textwidth]{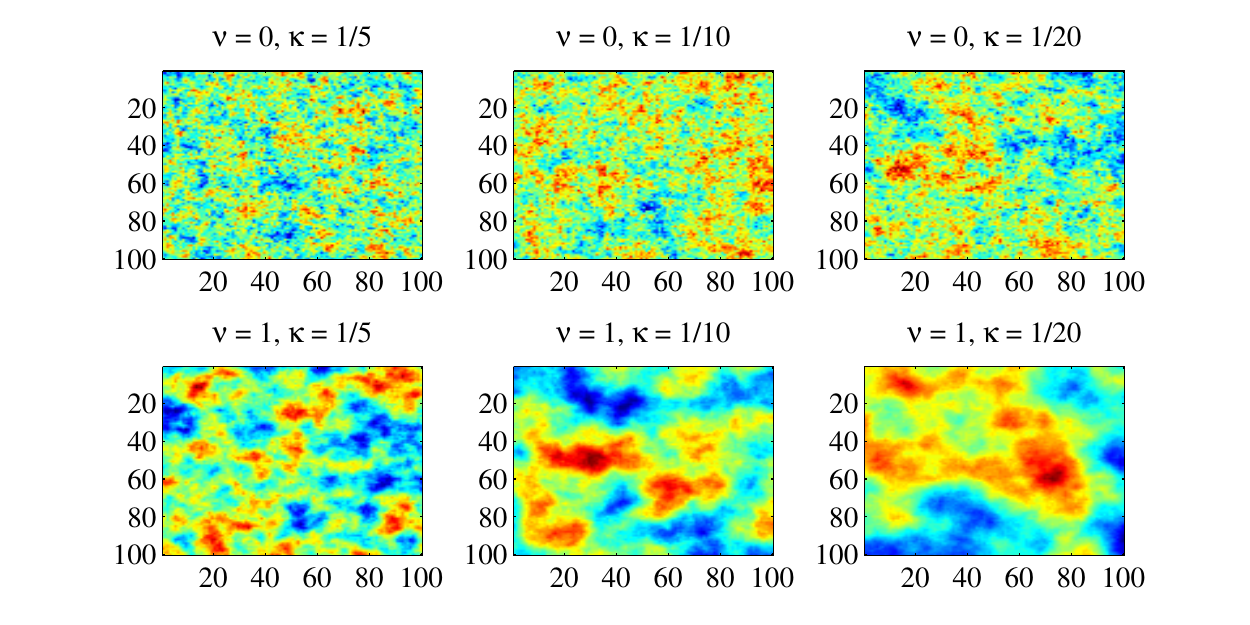}
\caption{Examples of simulated data for the six parameter combinations.}
\label{simexamples}
\end{figure}

\bibliographystyle{apa}
\bibliography{refs}

\end{document}